\documentclass[11pt]{article}
\usepackage[T2A]{fontenc}
\usepackage[cp1251]{inputenc}
\usepackage[english]{babel}
\usepackage{amssymb}
\usepackage{amsmath}
\usepackage{amsthm}
\usepackage{amsfonts}
\usepackage{mathrsfs}
\usepackage{eufrak}

\usepackage[pdftex]{graphicx}
\usepackage{hyperref}

\usepackage{diagbox}

\newtheorem{theorem}{Theorem}

\newtheorem*{theorem*}{Theorem}

\newtheorem{lemma}{Lemma}

\newtheorem{proposition}{Proposition}

\newtheorem{definition}{Definition}

\newtheorem{example}{Example}
\newtheorem{remark}{Remark}

\textwidth=150mm\hoffset=9mm \textheight=242mm \voffset=-20mm
\marginparwidth=0pt \marginparpush=0pt \marginparsep=0pt
\oddsidemargin=0mm \evensidemargin=-20mm \flushbottom

\begin{document}
\begin{center}
\textbf{\uppercase{L\'evy Laplacians, holonomy group and instantons on 4-manifolds}}

B.~O.~Volkov

\texttt{borisvolkov1986@gmail.com}

1) Steklov Mathematical Institute of Russian Academy of Sciences,

 ul. Gubkina 8, Moscow, 119991 Russia
 
2) Moscow Institute of Physics and Technology (National Research University),

Institutskiy per. 9, Dolgoprudny, Moscow Region, 141701  Russia 
\end{center}

{\bf Abstract:} 

The connection between Yang--Mills gauge fields on $4$-dimensional orientable compact Riemannian manifolds and modified L\'evy Laplacians is studied. A modified L\'evy Laplacian is obtained from the L\'evy Laplacian by the action of an infinite dimensional rotation. Under the assumption that the 4-manifold has a nontrivial restricted holonomy group of the bundle of self-dual 2-forms, the following is proved. There is a modified L\'evy Laplacian such that a parallel transport in some vector bundle over the 4-manifold is a solution of the  Laplace equation for this modified L\'evy  Laplacian if and only if the connection  corresponding to the parallel transport satisfies the Yang--Mills  self-duality (anti-self-duality)  equations.  An analogous connection between the Laplace equation for the L\'evy  Laplacian
and the Yang--Mills equations was previously known.

key words: L\'{e}vy Laplacian, infinite-dimensional Laplacians, Yang--Mills equations, instantons, holonomy group

AMS Subject Classification: 70S15,58B20,58J35,53C07

\section*{Introduction}

The  L\'evy Laplacian is an infinite-dimensional  differential operator 
which has not a finite dimensional analog. The  equivalence  of the Yang--Mills equations and the Laplace equation for the L\'evy Laplacian is known (see~\cite{AGV1993,AGV1994,LV2001}). In this paper, we study a similar connection of a modification of this operator
with instantons. Instantons play an important role in the classical and quantum theory of gauge fields (see~\cite{BPST} and the survey paper~\cite{EGH}). An exposition of the mathematical theory of instantons can be found in~\cite{AHS,Taubes,ADHM,BL}.

Let $A$ be a connection in a vector bundle over a Riemannian manifold $M$ and 
 $F=dA+A\wedge A$ be  the associated curvature. The Yang--Mills action functional
 has the form
 \begin{equation*}
S_{YM}(A)=\frac 12\int_{M}\|F(x)\|^2Vol(dx).
\end{equation*}
The Euler--Lagrange equations for this action functional are the Yang--Mills equations:
\begin{equation}
\label{ymint}
D_A^\ast F=0,
\end{equation}
where  $D_A^\ast$ is the adjoint operator to the exterior covariant derivative generated
by $A$.
 Any connection $A$ generates the parallel transport $U^A$ which can been considered as a section in a vector bundle over
the Hilbert manifold of $H^1$-curves with the fixed origin  in $M$. 
If $f$ is a smooth section in this infinite-dimensional  vector bundle then the value of the L\'evy Laplacian $\Delta_L^{AGV}$ can been defined as
$$
\Delta_L^{AGV}f=\mathrm{tr}_L^{AGV}f'',
$$
where the L\'evy trace  $\mathrm{tr}_L^{AGV}$ is a special linear integral functional on some space of infinite-dimensional bilinear forms.  Also the L\'evy Laplacian   $\Delta_L^{AGV}$ can been defined as a Ces\`aro mean of the second-order directional derivatives (see~\cite{AS2006,Volkov2017,Volkov2019}). In this paper, this approach is not used. 
The L\'evy Laplacian $\Delta_L^{AGV}$  is a more complex analog of the  classical L\'evy Laplacian, which was defined by P. L\'evy for the functions on $L_2([0,1],\mathbb{R})$ (see~\cite{L1951}).

The L\'evy Laplacian $\Delta_L^{AGV}$ is connected with the Yang--Mills gauge fields in the following way.

{\it Theorem A.
The following two assertions are equivalent:
\begin{enumerate}
\item the  connection $A$ is a solution of the Yang--Mills  equations;
\item the parallel transport $U^A$  is a solution of the Laplace equation $\Delta^{AGV}_LU^A=0$.
\end{enumerate}
}

Theorem A was proved for $M=\mathbb{R}^d$ by Accardi, Gibilisco and Volovich in~\cite{AGV1994}
and generalized for the base Riemannian manifold by Leandre and Volovich in~\cite{LV2001}.

If $M$ is a oriented 4-manifold, it is possible to consider 
the Yang--Mills self-duality equations 
\begin{equation}
\label{dualintro}
F=\ast F
\end{equation} or the Yang--Mills  anti-self-duality   equations 
\begin{equation}
\label{antidualintro}
F=-\ast F
\end{equation}
 on a connection $A$, where $\ast$  is the Hodge star on the manifold $M$.
A connection
 is called  an instanton or  an anti-instanton if it is a solution of equations~(\ref{antidualintro}) or~(\ref{dualintro}) respectively.  The Bianchi identities $D_AF=0$ imply that instantons and
anti-instantons are solutions of the Yang--Mills equations~(\ref{ymint}).

It is a natural question whether the L\'evy Laplacian and instantons are related (see~\cite{Accardi}).
In~\cite{VolkovLLI} by author, the family of the modified L\'evy Laplacians was introduced which is connected with instantons and
anti-instantons. Unlike the usual Laplacian, the L\'evy Laplacian $\Delta_L^{AGV}$ is not rotation invariant and
any smooth curve $W\in C^1([0,1],SO(4))$ defines a modified L\'evy Laplacian $\Delta_L^W$ which acts 
on a smooth section $f$ in the infinite dimensional bundle by the formula:
$$
\Delta_L^Wf=\mathrm{tr}_L^{AGV}(W^\ast f''W).
$$

The Lie group  $SO(p)$ is not simple only if $p=4$.
In this case, where two normal subgroups $S^3_L$ and $S^3_R$ of $SO(4)$. The Lie algebra $so(4)$ can been decomposed as a direct sum of Lie algebras
$$so(4)=\mathrm{Lie}(S^3_L)\oplus \mathrm{Lie}(S^3_R),$$ where
$$\mathrm{Lie}(S^3_L)\cong \mathrm{Lie}(S^3_R)\cong so(3).$$ This decomposition corresponds to the decomposition of the bundle of 2-forms into the direct 
sum  of the sub-bundles of self-dual and anti-self-dual  2-forms.

The following theorem was proved for instantons
over $\mathbb{R}^4$ in~\cite{VolkovLLI}.

{\it Theorem B.
\label{LLandI}
Let $W\in C^1([0,1],S^3_L)$ ($W\in C^1([0,1],S^3_R)$) and
$$dim\, span \{W^{-1}(t)\dot{W}(t)\}_{t\in[0,1]}\geq2.$$
 Let the value of the Yang--Mills action functional $S_{YM}$ is finite on a connection $A$. 
 The following two assertions are equivalent:
\begin{enumerate}
\item  the connection $A$ on $\mathbb{R}^4$ is  an instanton (anti-instanton);
\item  the parallel transport  $U^A$ is a solution of  the  Laplace  equation for the modified L\'evy Laplacian $\Delta_L^{W}$:
\begin{equation*}
 \Delta_L^{W} U^A=0.
\end{equation*}
\end{enumerate}
}

Theorem B  means that under some technical assumptions the anti-self-duality equations  on $\mathbb{R}^4$ are equal to the Laplace equation  for the  modified L\'evy Laplacian. In the proof of Theorem B, the fact  that $\mathbb{R}^4$  is not compact was essentially used.
In~\cite{Volkov2020} by author, the following theorem  was proved for instantons
on an orientable Riemannian 4-manifold $M$.

{\it Theorem C.
Let $\{\bf{e_1},\bf{e_2},\bf{e_3}\}$ be  a basis of the Lie algebra $\mathrm{Lie}(S^3_L)$ (Lie algebra $\mathrm{Lie}(S^3_R)$). 
Let $W_i(t)=e^{t\bf{e_i}}$ for $i\in \{1,2,3\}$.
The following two assertions are equivalent:
\begin{enumerate}
\item   the connection $A$ on $M$ is  an instanton (anti-instanton);
\item  the parallel transport  $U^A$ is a solution of  three Laplace  equations for the modified L\'evy Laplacians $\Delta^{W_i}_LU^A=0$ for $i\in\{1,2,3\}$.
\end{enumerate}
}

Theorem C   means that under some technical assumptions the anti-self-duality Yang--Mills equations (which are the system of the three nonlinear differential equations of the first order) are equal to the system of the three Laplace equations for the different modified L\'evy Laplacians.

It turns out that, in the case of a Riemannian manifold, the situation is influenced by the holonomy group.
In this  paper,  we strengthen the results of~\cite{Volkov2020}.  We prove that an analog of Theorem B holds for orientable compact  Riemannian manifolds with 
nontrivial  restricted holonomy group $\mathrm{Hol}_m^0(\Lambda^2_+(T^\ast M))$ of the bundle of self-dual 2-forms. (Note that anti-self-dual Yang--Mills equations are conformally invariant and, in the case of the trivial restricted holonomy group $\mathrm{Hol}_m^0(\Lambda^2_+(T^\ast M))$, it is possible to consider a conformally equivalent
metric with a nontrivial holonomy group.)
 We find the sufficient conditions 
on the curve $W\in C^1([0,1],S^3_L)$  such that  the following two assertions are equivalent:
\begin{enumerate}
\item   the connection $A$  is  an instanton;
\item  the parallel transport  $U^A$ is a solution of  the Laplace  equation for the  L\'evy Laplacian $\Delta^{W}_LU^A=0$.
\end{enumerate}
These conditions are different for the cases when the holonomy group $\mathrm{Hol}_m^0(\Lambda^2_+(T^\ast M))$ coincides with
 $SO(2)$ or $SO(3)$.

 It is known from the works of Atiyah and Donaldson (see~\cite{Atiyah,Donaldson}) about a 
one-to-one  correspondence between the moduli space of instantons on
$\mathbb{R}^4$ and the space of  based holomorphic maps from $\mathbb C\mathrm P^1$ to the loop space.
(In~\cite{Sergreev}, a similar correspondence between  the moduli space of  Yang--Mills fields and  the space of based harmonic
maps from $\mathbb C\mathrm P^1$ to the loop space was conjectured.) 
 It means that there is a connection between the
Yang--Mills fields and chiral fields with finite dimensional base manifolds and infinite dimensional target
spaces. On the contrary, a parallel transport generated by the Yang--Mills field can been considered as a chiral field
with an infinite-dimensional base manifold and a finite-dimensional
target space (see~\cite{Polyakov1979,Polyakov1980,Polyakov1987}).  In~\cite{AV1981},  the equations of the motion of the chiral fields on the parallel transport  with the divergence associated with the L\'evy Laplacian were considered (see also~\cite{Volkov2017,Volkov2019}).  The approach to the Yang--Mills fields based on the L\'evy Laplacian goes back to the paper~\cite{AV1981}. 
 Different approaches to the Yang--Mills fields  based on the parallel transport but not based on the L\'evy Laplacian
were used in~\cite{Gross,Driver,Bauer1998,Bauer2003,ABT,AT}. For a recent development in the study of the L\'evy Laplacian in the white noise theory, see~\cite{Volkov2018,AHJS}.

The paper is organized as follows. 
In Sec.~\ref{Sec1}, we give preliminary information about the Yang--Mills equations and instantons on  4-dimensional orientable Riemannian   manifolds.  In Sec.~\ref{Sec2}, we give  preliminary information  about the parallel transport and the holonomy group of the bundle of self-dual 2-forms.  
In Sec.~\ref{Sec3}, we give the scheme of  the definition of the second order differential operators on  the space of sections in an  infinite dimensional  vector bundle. Using this scheme, we define the family of the modified L\'evy Laplacians  parameterized  by the choice of a curve in the group $SO(4)$. We give the value of the modified L\'evy Laplacian on the parallel transport.  In Sec.~\ref{Sec4}, we  formulate and prove the main theorem
on the equivalence of the anti-self-duality   Yang--Mills equations on a manifold with a nontrivial holonomy group of the bundle of self-dual 2-forms and 
the Laplace equation for some modified L\'evy Laplacian.

\section{Instantons on 4-manifold}
\label{Sec1}

In this section, we give geometric preliminaries about Yang--Mills connections and instantons.
For more information see, for example,~\cite{FU,BL}.

Let $M$ be  a smooth  orientable   compact $4$-dimensional
Riemannian  manifold with a metric $g$. We will use the Einstein's notation for summation over the repeated indices and  will raise and lower indices using this  metric $g$. Let $G$ be a closed Lie group  realized as a subgroup of $SO(N)$.
The symbol $\mathfrak{g}$ denotes the Lie algebra of  $G$ endowed by the trace product:
$$(B_1,B_2)_{\mathfrak{g}}=-\mathrm{tr}(B_1B _2),\;\;B_1,B_2\in \mathfrak{g}.$$
Let $E=E(\mathbb{R}^N,\pi,M,G)$  be a vector bundle over $M$ with the
projection $\pi\colon E\to M$ and the structure group  $G$.  The fiber over $x\in M$ is $E_x=\pi^{-1}(x)\cong\mathbb{R}^N$.
Let $P$ be the principle bundle  over
$M$  associated with $E$ and  $\mathrm{ad} (P)=\mathfrak{g}\times_G M$   be the
adjoint bundle of $P$.
A connection $A(x)=A_\mu(x)dx^\mu$ in the vector bundle $E$ is  a smooth section in
$\Lambda^1(T^\ast M)\otimes \mathrm{ad}P$.  Let $\nabla$ denotes the covariant derivative generalized by this connection.
If $\phi$ is a smooth section in $\mathrm{ad}P$ then in local coordinates we have
$$
\nabla_\mu\phi=\partial_\mu \phi+[A_\mu,\phi].
$$
The curvature $F(x)=\sum_{\mu<\nu}F_{\mu\nu}(x)dx^\mu\wedge dx^\nu$ of the connection $A$ is a smooth section in $\Lambda^2(T^\ast M)\otimes \mathrm{ad}P$ defined by the formula $F_{\mu \nu}=\partial_\mu A_\nu-\partial_\nu A_\mu+[A_\mu,A_\nu]$.

Let $$D_A\colon C^\infty(M,\Lambda^p(T^\ast M)\otimes \mathrm{ad}P)\to C^\infty(M,\Lambda^{p+1}(T^\ast M)\otimes \mathrm{ad}P)$$ be the operator of the exterior covariant derivative. 
It is defined by its action on  forms $\alpha\otimes \phi$, where $\alpha$ is  a real $p$-form and
$\phi$ is a section in $\mathrm{ad}P$, by the formula
$$
D_A(\alpha\otimes \phi)=d\alpha\otimes \phi+(-1)^{p}\alpha\otimes \nabla \phi,
$$
where $d$ denotes the operator of the usual exterior derivative.
Let $$D_A^{\ast}:C^\infty(M,\Lambda^{p+1}(T^\ast M)\otimes \mathrm{ad}P) \to C^\infty(M,\Lambda^{p}(T^\ast M)\otimes \mathrm{ad}P)$$ be a formally adjoint to the operator $D_A$.
We have $D_A^{\ast}=-\ast D_A\ast$, where $\ast$ is the Hodge star on the manifold $M$.

The Yang--Mills action functional has the form
\begin{equation}
\label{YMaction1}
S_{YM}(A)=\frac 12\int_{M}\|F(x)\|^2Vol(dx),
\end{equation}
where $Vol$ is the Riemannian volume measure on the manifold $M$ and $\|\cdot\|$ is a norm generated by the trace product on the Lie algebra $\mathfrak{g}$.
The Yang--Mills equations  on a connection $A$ have the form
\begin{equation}
\label{YMequations}
(D_A^\ast F)=0.
\end{equation}
In local coordinates, we have
$$
(D_A^\ast F)_\nu=-\nabla^\mu F_{\mu\nu}
$$
and
 \begin{equation*}
 \nabla_\lambda F_{\mu
\nu}=\partial_\lambda F_{\mu \nu}+[A_\lambda,F_{\mu
\nu}]-F_{\mu
\kappa}\Gamma^\kappa_{\lambda\nu}-F_{\kappa\nu}\Gamma^\kappa_{\lambda\mu},
\end{equation*}
where $\Gamma^\kappa_{\lambda\nu}$  are the  Christoffel symbols of the L\'evy-Civita connection on $M$.
Solutions of the Yang--Mills equations are critical points of the Yang--Mills action functional~(\ref{YMaction1}).
These critical points are called Yang--Mills connections.

 Let $F_-=\frac 12 (F-\ast F)$ and $F_+=\frac 12 (F+\ast F)$ be the  anti-self-dual and self-dual parts of the curvature $F$ respectively.
 (We have $\ast F_-=-F_-$ and $\ast F_+=F_+$). 
 The connection $A$ is called anti-instanton (instanton)  if it is a solution of
the self-duality (anti-self-duality) Yang--Mills equations:
  \begin{equation} 
\label{autodual1}
F_-=0\,(F_+=0).
\end{equation}
The Yang--Mills action functional can been rewritten in the form
$$
S_{YM}(A)=\frac 12\int_{M}(\|F_-(x)\|^2+\|F_+(x)\|^2)Vol(dx).
$$ 
The Pontryagin number or topological charge
$$
k(A)=\frac 1{8\pi^2}\int_{M}(\|F_+(x)\|^2-\|F_-(x)\|^2)Vol(dx)
$$
is an integer topological invariant of the vector bundle. The inequality
$$
S_{YM}(A)\geq 4\pi^2|k(A)|
$$
imply that the instantons and  the anti-instantons, if they are exist, are local extrema of this  functional.

The gauge transform is a smooth section in $\mathrm{Aut} P$. Such a section $\psi$ acts on the connection by the formula
\begin{equation}
A\to A'=\psi^{-1}A\psi+\psi^{-1}d\psi
\end{equation}
and on the curvature by the formula
\begin{equation}
F\to F'=\psi^{-1}F\psi.
\end{equation}
The Lagrange function of~(\ref{YMaction1}), the Yang--Mills equations~(\ref{YMequations}), the self-duality equations and the anti-self-duality equations are invariant under the action of gauge transform.

The Weyl transformation (conformal transformation) of the metric $g\mapsto\overline{g}$ is defined by the formula 
$\overline{g}_{\mu\nu}=e^{2\varphi} g_{\mu\nu}$,
where $\varphi$ is smooth function on $M$.
In the dimension four, the Yang--Mills functional~(\ref{YMaction1}), instantons and anti-instantons are invariant under the 
Weyl transformation.   

Let $W_+$ and $W_-$ be the self-dual and anti-self-dual parts of the Weyl tensor (the conformally invariant part
of the Riemann curvature tensor) respectively. 
An oriented Riemannian 4-manifold  is called self-dual or anti-self-dual if $W_-=0$ or $W_+=0$ respectively.
It was proved in~\cite{AHS} by Atiyah, Hitchin and Singer, that in the case of the self-dual base manifold, 
the moduli space of  instantons (the factor space of all instantons with the respect to the gauge equivalence)  is a non-empty finite dimensional manifold.

Examples of self-dual manifolds (see~\cite{AHS}):
\begin{itemize}
\item Conformally flat manifolds. For example, 4-sphere $S^4$ or $S^1\times S^3$.
\item The complex projective plane  $\mathbb C\mathrm P^2$ is self-dual.
\item The 4-dimensional Calabi--Yau manifold is anti-self-dual with respect to it canonical orientation.
\end{itemize}

 If the intersection form on the manifold is indefinite then there are not instantons and anti-instantons on this manifold.
 The result by Atiyah, Hitchin and Singer was generalized  for manifolds with the positive intersection form by  Taubes in~\cite{Taubes}
 (see also~\cite{FU}). 
The moduli space of instantons on  $\mathbb{R}^4$  was described by Atiyah, Drinfeld, Hitchin and Manin in~\cite{ADHM}.
In this paper, we show that theorem on equivalence of the anti-self-duality equations and the Laplace equation for the modified L\'evy Laplacian  is  true for  anti-self-dual compact manifolds which are not  Calabi--Yau.

\section{Parallel transport and holonomy group}
\label{Sec2}

In this section, we give geometric preliminaries about the parallel transport and the holonomy group.

\subsection{Parallel transport}
\label{Sec2a}

For any interval $I$ let  the symbol $H^1(I,\mathbb{R}^4)$ denote the Sobolev space of $\mathbb{R}^4$-valued 
functions on $I$.
 It is the Hilbert space with scalar product
   $$(h_1,h_2)_1=\int_I(h_1(t),h_2(t))_{\mathbb{R}^4}dt+\int_I(\dot{h}_1(t),\dot{h}_2(t))_{\mathbb{R}^4}dt.$$
Let $\gamma$ be the mapping from $[0,1]$ to $M$.  It is called $H^1$-curve if $\phi_a\circ \gamma\mid_I\in H^1(I,\mathbb{R}^4)$
 for any  sub-interval $I$, such that  $\gamma(I)\subset W_a$. Here $(\phi_a, W_a)$  is
some coordinate chart of the manifold $M$.
Let $\Omega_m$ denote the set  of the $H^1$-curves with the origin at $m\in M$. 
This set can been endowed with the natural structure of  a Hilbert manifold modeled on the Hilbert space $H^1_0([0,1],\mathbb{R}^4)=\{\gamma\in H^1([0,1],\mathbb{R}^4)\colon \gamma(0)=0\}$ (see~\cite{Klingenberg,Klingenberg2,Driver}).

 For any  $H^1$-curve  $\gamma$  an operator $U^A_{t,s}(\gamma)\in \mathrm{Hom}(E_{\gamma(s)},E_{\gamma(t)})$, where $0\leq s\leq t \leq1$, is a solution  of the system 
 \begin{equation}
\label{form 2}
 \left\{
\begin{aligned}
\frac
{d}{dt}U^A_{t,s}(\gamma)=-A_{\mu}(\gamma(t))\dot{\gamma}^{\mu}(t) U^A_{t,s}(\gamma)\\
\frac
d{ds}U^A_{t,s}(\gamma)=U^A_{t,s}(\gamma)A_{\mu}(\gamma(s))\dot{\gamma}^{\mu}(s)\\
\left.U^A_{t,s}(\gamma)\right|_{t=s}=I_N.
\end{aligned}
\right.
\end{equation}
The operator $U^A_{1,0}\in \mathrm{Hom}(E_{\gamma(0)},E_{\gamma(1)})$ is a parallel transport along the curve $\gamma$ generated by the connection $A$.
Let  $\mathcal{E}_m$ be a Hilbert vector bundle over $\Omega_m$ such that its fiber over  $\gamma\in \Omega_m$ is the space  $\mathrm{Hom}(E_m,E_{\gamma(1)})$.
The mapping $\Omega_m\colon\gamma\to  U^A_{1,0}(\gamma)$ is a smooth section in this  vector bundle (see~\cite{Gross,Driver}).
Let $Q_{t,s}(\gamma)$  be a parallel transport with respect to the Levi-Civita connection in the tangent bundle $TM$ along  the restriction of the $H^1$-curve $\gamma$
on $[s,t]$.

\subsection{Holonomy group}
\label{Sec2b}

For  any $x\in M$ let $\Omega_{m,x}=\{\gamma\in \Omega_m\colon \gamma(1)=x\}$. Then $\Omega_{m,m}$ is the space of
$H^1$-loops based at $m$. Parallel transports with respect to the Levi-Civita connection  along  loops based at $m$ generate the holonomy group:  
$$
\mathrm{Hol}_m(M)=\{Q_{1,0}(\gamma)\colon \gamma\in \Omega_{m,m}\}.
$$
The restricted holonomy group  $\mathrm{Hol}^0_m(M)$ based at $m$  is the subgroup of $\mathrm{Hol}_m(M)$  generated by the parallel transports along contractible loops $\gamma\in\Omega_{m,m}$. Let $\mathrm{Hol}^0_x(M)$  denote the restricted holonomy group at $x\in M$ ($\mathrm{Hol}^0_x(M)\cong \mathrm{Hol}^0_m(M)$).
Due to $M$ is orientable and compact, $\mathrm{Hol}^0_x(M)$ is a connected closed Lie subgroup of $SO(4)$.

The Levi-Civita connection induces  the Riemannian connection on the vector bundle $\Lambda^2(TM)=\Lambda^2_+(TM)\oplus \Lambda^2_-(TM)$.
The group $\mathrm{Hol}^0_x(\Lambda^2(TM))=\{K\in \mathrm{Aut}(\Lambda^2(T_xM)) \colon K=H\wedge H, H\in \mathrm{Hol}^0_x(M)\}$ is the restricted holonomy group in 
$\Lambda^2(TM)$ at $x$.
The spaces $\Lambda^2_+(T_xM)$ and $\Lambda^2_-(T_xM)$ are invariant under the action of $K\in \mathrm{Hol}^0_x(\Lambda^2(TM))$.
So  $$\mathrm{Hol}^0_x(\Lambda^2(TM))=\mathrm{Hol}^0_x(\Lambda^2_+(TM))\times \mathrm{Hol}^0_x(\Lambda^2_-(TM)).$$ 
The group $\mathrm{Hol}^0_x(\Lambda^2_+(TM))$ is a closed connected  Lie subgroup of $SO(3)$.
   There are three possible cases:

\begin{itemize}
\item $\mathrm{Hol}_m^0(\Lambda^2_+(TM))$ is trivial. We can consider the following example.  If $M$ is a simply connected manifold, 
then holonomy groups and restricted holonomy groups coincide. 
If a simply connected  base manifold $M$ is irreducible and 
$\mathrm{Hol}_m(M)\cong SU(2)$, then  due to Berger's classification theorem $M$ (see~\cite{Besse}) is a Calabi--Yau manifold. 
In this case, $\mathrm{Hol}_m(\Lambda^2_+(TM))$ is trivial. 
 
\item $\mathrm{Hol}_m^0(\Lambda^2_+(TM))\cong SO(2)$. We can consider the following example.  If a simply connected 
base manifold $M$ is irreducible  and 
$\mathrm{Hol}_m(M)\cong U(2)$,  then  due to Berger's and Cartan's classification theorems  $M$ is Kahler. (If $M$ is symmetric, then  $M=\mathbb C\mathrm P^2$).  In this case, then  $\mathrm{Hol}_m(\Lambda^2_+(TM))\cong SO(2)$.

\item   $\mathrm{Hol}_m^0(\Lambda^2_+(TM))\cong SO(3)$.  One can show that if $\mathrm{Hol}^0_m(M)\cong SO(4)$ or  $\mathrm{Hol}^0_m(M)\cong SO(3)$, then 
$\mathrm{Hol}^0_m(\Lambda^2_+(TM))\cong SO(3)$.  For example, due to Berger's classification theorem in  a general case of the simply connected orientable  irreducible  nonsymmetric manifold $M$ we have $\mathrm{Hol}_m(M)\cong SO(4)$. If $M=S^4$ then also $\mathrm{Hol}_m(M)\cong SO(4)$.
The case $\mathrm{Hol}_m(M)\cong SO(3)$ can been realized if $M$ is reducible (for example, if $M=S^1\times S^3$).
\end{itemize}

\section{L\'evy Laplacian}
\label{Sec3}

In this section, we provide the general scheme of the definition of the 
second order differential operator. We provide the   definitions of the L\'evy Laplacian and the modified L\'evy Laplacians. We give the value of the modified 
L\'evy Laplacian on the parallel transport.

\subsection{Second order differential operators}
\label{Sec3a}

Fix an orthonormal  basis $\{e_1,e_2,e_3,e_4\}$  in the tangent space $T_mM$. 
We identify
$$
H^1_0([0,1],\mathbb{R}^4)\ni h(\cdot)=(h^\mu(\cdot))\leftrightarrow e_\mu h^\mu(\cdot)\in H^1_0([0,1],T_mM).
$$

Let  $\mathcal H^1_0$ be the tangent bundle over $\Omega_m$. Its  fiber over $\gamma\in \Omega_m$ is a Hilbert space $H^1_\gamma(TM)$  of 
$H^1$-fields ($H^1$-sections in the pullback bundle $\gamma^{*}TM$) $X$ along $\gamma$ such that $X(0)=0$ (see~\cite{Klingenberg,Klingenberg2,Driver}).
The scalar product 
on this space is defined by the formula
\begin{equation}
\label{metrG1}
G_1(X,Y)=\int_0^1g(X(t),Y(t))dt
\\+\int_0^1g(\nabla X(t),\nabla Y(t))dt,
\end{equation}
where
\begin{equation}
\label{covder}
\nabla X^\mu(t)= \dot{X}^\mu(t)+\Gamma^\mu_{\lambda \nu}(\gamma(t))X^\lambda(t)\dot{\gamma}^\nu(t).
\end{equation}

For any $\gamma\in \Omega_m$ the L\'evy-Civita connection generates the canonical isometric isomorphism  between
$H^1_0([0,1],\mathbb{R}^4)$ and $H^1_\gamma(TM)$, which action on $h\in H^1_0([0,1],\mathbb{R}^4)$ we will denote by $\widetilde{h}$. This isomorphism acts   by the formula
\begin{equation}
\label{isomorhism}
\widetilde{h}(\gamma;t)=Q_{t,0}(\gamma)h(t)=e_\mu(\gamma,t)h^\mu(t),
\end{equation}
where $e_\mu(\gamma,\cdot)=Q_{\cdot,0}(\gamma)e_\mu$ is the parallel transport of $e_\mu$ ($\mu\in\{1,2,3,4\}$) along $\gamma\in \Omega_m$
by the Levi-Civita connection. The space $H^1_\gamma(TM)$ is the tangent space to the Hilbert manifold $\Omega_m$ at $\gamma$. Let 
 $\mathcal H^1_0$ denote the tangent bundle over  $\Omega_m$.
Due to isomorphism~(\ref{isomorhism}), the tangent bundle $\mathcal H^1_0$ is trivial. 

 Let $\mathcal H^1_{0,0}$ denote the sub-bundle of $\mathcal H^1_0$ such that
 the  fiber of  $\mathcal H^1_{0,0}$  over  $\gamma\in\Omega_m$ is the space   $\{X\in H^1_\gamma(TM)\colon X(1)=0\}$.
By  canonical isomorphism~(\ref{isomorhism}) its fiber is isomorphic to the Hilbert space $H^1_{0,0}:=H^1_{0,0}([0,1],\mathbb{R}^4)=\{\gamma\in H^1_0([0,1],\mathbb{R}^4)\colon \gamma(1)=0\}$.
 
Let $X$ be a smooth section in $\mathcal H^1_{0,0}$.
If   $f$ is a smooth section in $\mathcal E_m$, then the derivative $d_Xf$ of $f$ along the field $X$ is correctly defined.
Due to triviality of the vector bundle $\mathcal H^1_{0,0}$, for any smooth section $f$ in  $\mathcal E_m$ there exists the section  $\widetilde{D}f$ in  $\mathcal E_m \otimes H^1_{0,0}\cong \mathcal H^1_{0,0}$ such that
$$d_{\widetilde{h}}f(\gamma)=<\widetilde{D}f(\gamma),h>$$ for any $h\in H^1_{0,0}$. Let $\mathcal{L}(H^1_{0,0},H^1_{0,0})$ denote the space of all continuous linear operators in $H^1_{0,0}$.  There exists the section  $\widetilde{D}^2f$ in $\mathcal E_m \otimes \mathcal{L}(H^1_{0,0},H^1_{0,0})$ such that
$$<d_{\widetilde{u}}\widetilde{D}f(\gamma),v>=<\widetilde{D}^2f(\gamma)u,v>$$ for any $u,v\in H^1_{0,0}$. 

\begin{definition}
Let $S$ be  a linear functional  on $domS\subset  \mathcal{L}(H^1_{0,0}, H^1_{0,0})$.
The domain of the second order differential operator $D^{2,S}$ associated with $S$   is the space of  all smooth sections $f$ in $\mathcal E_m$ such that $\widetilde{D}^2f(\gamma)\in domS$ for all $\gamma\in \Omega_m$. The second order differential operator $D^{2,S}$ 
 acts
on $f$ by the formula
$$
D^{2,S}f(\gamma)=S(\widetilde{D}^2f(\gamma)).
$$
\end{definition}

\subsection{Modified L\'evy Laplacians}
\label{Sec4b}

Let $T^2_{AGV}$  be the space of all continuous bilinear real-valued functionals on    $H_{0,0}^1\times H_{0,0}^1$ that have the form
\begin{multline}
\label{f''}
Q(u,v)=\int_0^1\int_0^1Q^V(t,s)<u(t),v(s)>dtds+\\
+\int_0^1Q^L(t)<u(t),v(t)>dt+
\\
+\frac 12\int_0^1Q^S(t)<\dot{u}(t),v(t)>dt+\frac 12\int_0^1Q^S(t)<\dot{v}(t),u(t)>dt,\;\;u,v\in H^1_{0,0},
\end{multline}
where
$Q^V\in L_2([0,1]\times[0,1],T^2(\mathbb{R}^4))$,
$Q^L\in L_1([0,1],Sym^2(\mathbb{R}^4))$,
$Q^S\in L_\infty([0,1],\Lambda^2(\mathbb{R}^4))$.
Here  $T^2(\mathbb{R}^4)$, $Sym^2(\mathbb{R}^4)$ and $\Lambda^2(\mathbb{R}^4))$ denote the spaces of all
tensors, all symmetrical tensors  and all antisymmetrical tensors of type $(0,2)$ on $\mathbb{R}^4$ respectively.
 The functions $Q^V(\cdot,\cdot)$,
 $Q^L(\cdot)$  and $Q^S(\cdot)$ are called the Volterra integral kernel, the L\'evy integral kernel and the singular integral kernel respectively. It was proved in~\cite{AGV1994} that these kernels are defined in a unique way (see also~\cite{Volkov2019}).

\begin{definition}
The L\'evy trace  is a linear functional  $\mathrm{tr}^{AGV}_L$ acting on $Q\in T^2_{AGV}$  by
$$
\mathrm{tr}^{AGV}_LQ=\int_0^1\mathrm{tr}\,Q^L(t)dt.
$$
The  L\'evy Laplacian $\Delta^{AGV}_L$  is the second order differential operator $D^{2,\mathrm{tr}^{AGV}_L}$.
\end{definition}

This operator was introduced by Accardi, Gibilisco and Volovich in~\cite{AGV1994} for the flat case and  by Leandre and Volovich in~\cite{LV2001} for the case of Riemannian manifold.

Let us introduce the modification of the L\'evy trace that is connected with instantons. 
Let $W\in C^1([0,1],SO(4))$.
We can consider $W$ as an orthogonal operator on $L_2([0,1],\mathbb{R}^4)$ acting on $u\in L_2([0,1],\mathbb{R}^4)$ by pointwise multiplication:
$$
(Wu)(t)=W(t)u(t).
$$
The space $H_{0,0}^1$ is invariant under the action of  $W$. Then for any $Q\in T^2_{AGV}$
a tensor $W^\ast QW\in  T^2_{AGV}$ is defined by
$$
W^\ast QW(u,v)=Q(Wu,Wv),\;\;u,v\in H^1_{0,0}.
$$
\begin{definition}
The modified L\'evy trace associated with the curve $W\in C^1([0,1],SO(4))$ is a linear functional $\mathrm{tr}^W_L$ acting on $Q\in T^2_{AGV}$  by
$$
\mathrm{tr}^W_L Q=\mathrm{tr}_L^{AGV}(W^\ast QW).
$$ 
The modified  L\'evy Laplacian $\Delta_L^W$ associated with the curve $W\in C^1([0,1],SO(4))$ is the second order differential operator $D^{2,\mathrm{tr}^W_L}$.
\end{definition}

Where two normal subgroups $S^3_L$ and $S^3_R$ of $SO(4)$, which consist of real matrices
$$\begin{pmatrix}
a&-b&-c&-d\\
b&\;\,\, a&-d&\;\,\, c\\
c&\;\,\, d&\;\,\, a&-b\\
d&-c&\;\,\, b&\;\,\, a
\end{pmatrix}\text{ and }
\begin{pmatrix}
a&-b&-c&-d\\
b&\;\,\, a&\;\,\, d&-c\\
c&-d&\;\,\, a&\;\,\, b\\
d&\;\,\, c&-b&\;\,\, a
\end{pmatrix},
$$
where $a^2+b^2+c^2+d^2=1$, respectively. 
 The Lie algebras  $\mathrm{Lie}(S^3_L)$ and $\mathrm{Lie}(S^3_R)$ of the Lie groups  $S^3_L$ and $S^3_R$ consist of  matrices of the form   
$$\begin{pmatrix}
0&-b&-c&-d\\
b&\;\,\, 0&-d&\;\,\, c\\
c&\;\,\, d&\;\,\, 0&-b\\
d&-c&\;\,\, b&\;\,\, 0
\end{pmatrix}\text{ and }
\begin{pmatrix}
0&-b&-c&-d\\
b&\;\,\, 0&\;\,\, d&-c\\
c&-d&\;\,\, 0&\;\,\, b\\
d&\;\,\, c&-b&\;\,\, 0
\end{pmatrix},
$$
where $b,c,d\in \mathbb{R}$, respectively, So it holds that $$so(4)=\mathrm{Lie}(S^3_L)\oplus \mathrm{Lie}(S^3_R).$$

 Let the symbols  $P_L$ and $P_R$ denote the orthogonal projections in the Lie algebra $so(4)$ on its subalgebras  $\mathrm{Lie}(S^3_L)$  and $\mathrm{Lie}(S^3_R)$ respectively.
If $W\in C^1([0,1],SO(4))$ let ${\bf L}_W(t)=W^{-1}(t)\dot{W}(t)$. Then ${\bf L}_W\in C([0,1],so(4))$.
Let ${\bf L}_W^+(t)=P_L({\bf L}_W(t))$ and ${\bf L}_W^-(t)=P_R({\bf L}_W(t))$.
Due to the fact that $Q^S(t)$ is anti symmetric, it can be also considered as an element from the algebra $so(4)$.
Let $Q^S_+(t)=P_L(Q^S(t))$ and $Q^S_-(t)=P_R(Q^S(t))$.

It can be checked by direct computations that the following proposition holds (see~\cite{Volkov2020}).
\begin{proposition}
\label{Prop1}
Let $W\in C^1([0,1],SO(4))$.
It  holds
\begin{multline}
\mathrm{tr}^{W}_LQ=\int_0^1\mathrm{tr}Q^L(t)dt-\int_0^1\mathrm{tr}({\bf L}_W(t)Q^S(t))dt=\\
=\int_0^1\mathrm{tr}Q^L(t)dt-\int_0^1\mathrm{tr}({\bf L}_W^+(t)Q^S_+(t))dt-\int_0^1\mathrm{tr}({\bf L}_W^-(t)Q^S_-(t))dt.
\end{multline}
\end{proposition}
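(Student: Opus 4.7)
The plan is to substitute $u\mapsto Wu$ and $v\mapsto Wv$ directly into the canonical decomposition~(\ref{f''}) of $Q$, re-read off the three integral kernels of the resulting bilinear form $W^{\ast}QW$, and then apply $\mathrm{tr}_L^{AGV}$.  Since by definition $\mathrm{tr}_L^{W}Q=\mathrm{tr}_L^{AGV}(W^{\ast}QW)$ and $\mathrm{tr}_L^{AGV}$ picks up only the L\'evy kernel, the task reduces to identifying that kernel and taking its trace.

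First I would handle the three pieces of $Q$ separately.  The Volterra piece becomes a Volterra integral with new kernel $W(t)^{T}Q^{V}(t,s)W(s)$, which does not contribute to the L\'evy trace.  The L\'evy piece becomes a L\'evy-type integral with new kernel $W(t)^{T}Q^{L}(t)W(t)$, whose trace equals $\mathrm{tr}\,Q^{L}(t)$ pointwise by orthogonality of $W$.  The singular piece is the delicate one: using the Leibniz rule $\dot{(Wu)}(t)=\dot W(t)u(t)+W(t)\dot u(t)$, it splits into two kinds of terms; those containing $W\dot u$ or $W\dot v$ remain of singular type with new antisymmetric kernel $W^{T}Q^{S}W$ (irrelevant for the L\'evy trace), while those containing $\dot W u$ or $\dot W v$ have no derivatives of $u,v$ and therefore contribute to the new L\'evy kernel.

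Second, I would assemble the cross-contribution from the $\dot W u,\dot W v$ terms into symmetric form.  Writing $Q^{S}(x,y)=x^{T}Q^{S}y$ with $Q^{S}$ viewed as an antisymmetric matrix, the two halves combine into the bilinear form with symmetric matrix
\begin{equation*}
\tfrac12\bigl(\dot W(t)^{T}Q^{S}(t)W(t)-W(t)^{T}Q^{S}(t)\dot W(t)\bigr),
\end{equation*}
the minus sign coming from $(W^{T}Q^{S}\dot W)^{T}=-\dot W^{T}Q^{S}W$ (antisymmetry of $Q^{S}$).  Adding this to $W^{T}Q^{L}W$ gives the new L\'evy kernel.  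Taking its matrix trace, applying cyclicity together with $\dot W W^{T}+W\dot W^{T}=0$ (obtained by differentiating $WW^{T}=I$) and $Q^{S,T}=-Q^{S}$, the cross part collapses to the single scalar $-\mathrm{tr}({\bf L}_{W}(t)Q^{S}(t))$.  Integrating in $t$ and combining with $\int\mathrm{tr}\,Q^{L}(t)\,dt$ yields the first equality.

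The second equality is purely algebraic.  On $so(4)=\mathrm{Lie}(S^{3}_{L})\oplus\mathrm{Lie}(S^{3}_{R})$ the invariant pairing $(X,Y)\mapsto\mathrm{tr}(XY)$ is block-diagonal with respect to this decomposition: the two ideals commute, and their trace-pairing vanishes.  Splitting ${\bf L}_{W}={\bf L}_{W}^{+}+{\bf L}_{W}^{-}$ and $Q^{S}=Q^{S}_{+}+Q^{S}_{-}$ and using this orthogonality kills the cross terms, giving the stated decomposition.  The main obstacle will be the sign and ordering bookkeeping in the second step: reducing the trace of $\tfrac12(\dot W^{T}Q^{S}W-W^{T}Q^{S}\dot W)$ to the clean form $-\mathrm{tr}({\bf L}_{W}Q^{S})$ is the one place where both $\dot{(WW^{T})}=0$ and the antisymmetry of $Q^{S}$ must be used jointly, and it is easy to end up with $\dot W$ and $W$ in the wrong order.
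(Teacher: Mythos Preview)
Your approach is correct and is precisely the ``direct computation'' that the paper alludes to; the paper does not give its own proof of Proposition~\ref{Prop1} but simply states that it ``can be checked by direct computations'' and cites~\cite{Volkov2020}. Your three-step plan (transform each kernel separately, isolate the $\dot W u$, $\dot W v$ contributions as the new L\'evy piece, then invoke the orthogonality of $\mathrm{Lie}(S^3_L)$ and $\mathrm{Lie}(S^3_R)$ under the trace pairing) is exactly the intended argument.

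One caution on the point you yourself flag as the main obstacle. When you take the trace of $\tfrac12(\dot W^{T}Q^{S}W-W^{T}Q^{S}\dot W)$, cyclicity and $(Q^{S})^{T}=-Q^{S}$ give $-\mathrm{tr}(W^{T}Q^{S}\dot W)=-\mathrm{tr}(\dot W W^{-1}Q^{S})$, i.e.\ the \emph{right} logarithmic derivative $\dot W W^{-1}$ appears naturally, whereas the paper defines ${\bf L}_{W}=W^{-1}\dot W$. These two are conjugate but not equal, and their trace-pairings with a fixed $Q^{S}$ differ in general. The paper itself uses $\dot W W^{-1}$ when it later writes $L^{W}(\gamma,t,r)=\mathrm{tr}(\dot W(r)W^{-1}(r)L(\gamma,t))$, so there is a mild notational inconsistency in the source. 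For the applications (curves in $S^{3}_{L}$ or $S^{3}_{R}$, one-parameter subgroups $e^{t\mathbf{e}_i}$, and the span condition in Theorem~B) the distinction is immaterial, but in your write-up you should state explicitly which convention you obtain and note that the paper's subsequent formulas use $\dot W W^{-1}$.
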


\begin{example}
\label{example1}
Let $\mathfrak{f}\in C^\infty(M,\mathbb{R})$. Let
$\mathfrak{L_{f}}\colon \Omega_m\to \mathbb{R}$
be defined by:
$$
\mathfrak{L_{f}}(\gamma)=\int_0^1\mathfrak{f}(\gamma(t))dt.
$$
The functional $\mathfrak{L_{f}}$ belongs to the domain of the L\'evy Laplacian $\Delta^{AGV}_L$ and the singular part of the second derivative of  $\mathfrak{L_{f}}$ vanishes.
One can show that
$$
 \Delta^{W}_L \mathfrak{L_{f}}(\gamma)=\int_0^1 \Delta_{(M,g)} \mathfrak{f}(\gamma(t))dt,
$$
 for any $W\in C^1([0,1],SO(4))$. Here $\Delta_{(M,g)}$ is the Laplace--Beltrami operator on the manifold $M$.
\end{example}

\subsection{Parallel transport and modified L\'evy Laplacian}
\label{Sec4c}

The parallel transport $U^A_{1,0}$ belongs to the domain of the L\'evy Laplacian (see~\cite{Volkov2020,LV2001}).
 We have
\begin{multline}
\label{der23}
<\widetilde{D}^2U^A_{1,0}(\gamma)u,v>=\\
=\int_0^1\int_0^1 K^V(\gamma;t,s)<u(t),v(s)>dtds+\\
+\int_0^1K^L(\gamma;t)<u(t),v(t)>dt+\\
+\frac 12\int_0^1K^S(\gamma;t)<\dot{u}(t),v(t)>dt+\frac 12\int_0^1K^S(\gamma;t)<\dot{v}(t),u(t)>dt,\;\;u,v\in H^1_{0,0},
\end{multline}
where the L\'evy kernel $K^L$ and the singular kernel $K^S$  have  the form
\begin{multline*}
K^L_{\mu\nu}(\gamma;t)=\frac 12U^A_{1,t}(\gamma)(-\nabla_{e_\mu(\gamma,t)} F(\gamma(t))<e_\nu(\gamma,t),\dot{\gamma}(t)>-\\-\nabla_{e_\nu(\gamma,t)} F(\gamma(t))<e_\mu(\gamma,t),\dot{\gamma}(t)>)U^A_{t,0}(\gamma),
\end{multline*}
and
$$
K^S_{\mu\nu}(\gamma;t)=U^A_{1,t}(\gamma)F(\gamma(t))<e_\mu(\gamma,t),e_\nu(\gamma,t)>U^A_{t,0}(\gamma).
$$

\begin{remark}
The form of the Volterra kernel in~(\ref{der23}) does not play a role  in the present paper. The exact expression of this kernel can been found in~\cite{Volkov2020}.
\end{remark}

The following theorem directly follows from Proposition~\ref{Prop1}.

\begin{theorem}
\label{Theorem1}
The value of the modified  L\'evy Laplacian $\Delta_L^W$ on  the  parallel transport is
\begin{multline}
\label{lllaplace00}
\Delta^W_LU^A_{1,0}(\gamma)=-
\int_0^1U^A_{1,t}(\gamma)D_A^\ast F(\gamma(t))\dot{\gamma}(t)U^A_{t,0}(\gamma)dt+\\
+\int_0^1U^A_{1,t}(\gamma)\mathrm{tr}({\bf L}_W(t)F(\gamma(t))U^A_{t,0}(\gamma)dt.
\end{multline}
\end{theorem}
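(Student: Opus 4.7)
The plan is to apply Proposition~\ref{Prop1} directly to the Hessian $\widetilde{D}^2 U^A_{1,0}(\gamma)$, whose decomposition into Volterra, L\'evy, and singular kernels is already recorded in~(\ref{der23}). By the very definition of the modified L\'evy Laplacian as the second-order operator $D^{2,\mathrm{tr}^W_L}$, one has $\Delta^W_L U^A_{1,0}(\gamma)=\mathrm{tr}^W_L\widetilde{D}^2 U^A_{1,0}(\gamma)$, so Proposition~\ref{Prop1} reduces the task to computing the two integrals $\int_0^1\mathrm{tr}\,K^L(\gamma;t)\,dt$ and $-\int_0^1\mathrm{tr}\bigl({\bf L}_W(t)K^S(\gamma;t)\bigr)\,dt$. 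The Volterra kernel $K^V$ plays no role because the L\'evy trace annihilates it by construction.

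For the L\'evy-trace contribution I will sum the diagonal entries $K^L_{\mu\mu}(\gamma;t)$ using the explicit expression from~(\ref{der23}). The two symmetric summands coalesce, producing $-\sum_{\mu}U^A_{1,t}(\gamma)(\nabla_{e_\mu(\gamma,t)}F)(e_\mu(\gamma,t),\dot\gamma(t))U^A_{t,0}(\gamma)$. Since $\{e_\mu(\gamma,t)\}_{\mu=1}^{4}$ is the Levi-Civita parallel transport of an orthonormal basis of $T_m M$ along $\gamma$ and the Levi-Civita connection is metric-compatible, this frame remains orthonormal at each $\gamma(t)$, so the sum coincides with the frame-independent contraction $\nabla^\mu F_{\mu\nu}(\gamma(t))\dot\gamma^\nu(t)$. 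Applying the local-coordinate identity $(D_A^\ast F)_\nu=-\nabla^\mu F_{\mu\nu}$ recorded in Section~\ref{Sec1} then turns this into $-U^A_{1,t}(\gamma)(D_A^\ast F)(\gamma(t))\dot\gamma(t)U^A_{t,0}(\gamma)$, which is exactly the first term on the right-hand side of~(\ref{lllaplace00}).

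For the singular contribution, the formula $K^S_{\mu\nu}(\gamma;t)=U^A_{1,t}(\gamma)F(\gamma(t))\langle e_\mu(\gamma,t),e_\nu(\gamma,t)\rangle U^A_{t,0}(\gamma)$ identifies $K^S(\gamma;t)$, up to conjugation by parallel transports, with the $4\times 4$ antisymmetric matrix of curvature components $F_{\mu\nu}(\gamma(t))$ in the moving orthonormal frame. Because the conjugating factors carry no frame indices, they pull outside of the $so(4)$-trace, giving $\mathrm{tr}\bigl({\bf L}_W(t)K^S(\gamma;t)\bigr)=U^A_{1,t}(\gamma)\,\mathrm{tr}\bigl({\bf L}_W(t)F(\gamma(t))\bigr)\,U^A_{t,0}(\gamma)$, where the residual trace is the ordinary matrix trace of two antisymmetric $4\times 4$ matrices. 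Inserting this with the minus sign prescribed by Proposition~\ref{Prop1} delivers the second term of~(\ref{lllaplace00}). The whole argument is essentially bookkeeping rather than a conceptual obstacle; the only real care point is sign tracking and verifying that the summation $\sum_\mu(\nabla_{e_\mu}F)(e_\mu,\dot\gamma)$ over the parallel-transported orthonormal frame genuinely reproduces the invariant divergence $\nabla^\mu F_{\mu\nu}$, which is immediate from metric compatibility of the Levi-Civita connection.
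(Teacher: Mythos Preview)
Your approach is exactly the paper's: the paper's entire proof of Theorem~\ref{Theorem1} is the single sentence ``directly follows from Proposition~\ref{Prop1}'', and you carry out precisely that deduction by feeding the kernels $K^L$ and $K^S$ from~(\ref{der23}) into the formula of Proposition~\ref{Prop1}. The additional details you supply (orthonormality of the parallel-transported frame, the identification $\sum_\mu(\nabla_{e_\mu}F)(e_\mu,\dot\gamma)=\nabla^\mu F_{\mu\nu}\dot\gamma^\nu$, and the factoring of the conjugating parallel transports out of the $so(4)$-trace) are exactly the bookkeeping the paper leaves implicit.
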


\begin{remark}
Equality~(\ref{lllaplace00}) can been rewritten in the form
\begin{multline}
\label{lllaplace1}
\Delta^W_LU^A_{1,0}(\gamma)=-
\int_0^1U^A_{1,t}(\gamma)D_A^\ast F(\gamma(t))\dot{\gamma}(t)U^A_{t,0}(\gamma)dt+\\
+\int_0^1U^A_{1,t}(\gamma)\mathrm{tr}({\bf L}_W^+(t)F_+(\gamma(t))U^A_{t,0}(\gamma)dt+\\
+\int_0^1U^A_{1,t}(\gamma)\mathrm{tr}({\bf L}_W^-(t)F_-(\gamma(t))U^A_{t,0}(\gamma)dt.
\end{multline}
Now we can illustrate the relationship between the modified L\'evy Laplacian and instantons.
Let $F$ be an instanton. Then $F_+=0$ and $D_A^\ast F=0$. Hence, the first and the second terms in the right side 
of~(\ref{lllaplace1}) vanish. If additionally  $W\in C^1([0,1],S^3_L)$, then ${\bf L}_W^+=0$ for all $t\in[0,1]$ and 
 the right side of~(\ref{lllaplace1}) is zero.
\end{remark}

The first term in the right side of~(\ref{lllaplace00}) is invariant under reparameterization of the curve $\gamma$
but the second  term is not. Therefore, the following lemma is true.
\begin{lemma}
\label{lemma1}
Let $W\in C^1([0,1],SO(4))$.
If the parallel transport  $U^A_{1,0}$ is a solution of the Laplace equation for the modified  L\'evy Laplacian   $\Delta_L^{W}$:
$$
 \Delta_L^{W} U^A_{1,0}=0,
$$
  then the connection $A$ satisfies the Yang--Mills equations and for any $\gamma\in \Omega_m$ the following holds
  $$
 \int_0^1 U^A_{1,t}(\gamma)\mathrm{tr}({\bf L}_W(t)F(\gamma(t))U^A_{t,0}(\gamma)dt=0.
  $$
\end{lemma}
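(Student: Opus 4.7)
The plan is to exploit the asymmetry noted in the remark just above the lemma: the first term in the right-hand side of~(\ref{lllaplace00}), call it $I_1(\gamma)$, is invariant under reparameterization of $\gamma$, whereas the second term, call it $I_2(\gamma,W)$, is not. The hypothesis $\Delta_L^{W}U^A_{1,0}(\gamma)=0$, valid for every $\gamma\in\Omega_m$, reads $I_1(\gamma)=I_2(\gamma,W)$. For any smooth monotone bijection $\sigma\colon[0,1]\to[0,1]$ with $\sigma(0)=0$, $\sigma(1)=1$ and $\gamma_\sigma(t):=\gamma(\sigma(t))$, the relation $U^A_{t,s}(\gamma_\sigma)=U^A_{\sigma(t),\sigma(s)}(\gamma)$ together with the substitution $u=\sigma(t)$ yields $I_1(\gamma_\sigma)=I_1(\gamma)$, so applying the Laplace equation to $\gamma_\sigma$ gives
\[
I_2(\gamma_\sigma,W)\;=\;I_1(\gamma_\sigma)\;=\;I_1(\gamma)
\]
independently of $\sigma$.

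I would then take $\sigma_n(t)=t^n$, so that $\gamma_{\sigma_n}(t)=\gamma(t^n)\to m$ pointwise on $[0,1)$ as $n\to\infty$. Continuity of $U^A_{\cdot,\cdot}(\gamma)$, $F\circ\gamma$, and ${\bf L}_W$ on the compact set $[0,1]$ uniformly bounds the integrand of
\[
I_2(\gamma_{\sigma_n},W)=\int_0^1 U^A_{1,t^n}(\gamma)\,\mathrm{tr}\bigl({\bf L}_W(t)F(\gamma(t^n))\bigr)\,U^A_{t^n,0}(\gamma)\,dt,
\]
so dominated convergence gives
\[
I_1(\gamma)\;=\;\lim_{n\to\infty}I_2(\gamma_{\sigma_n},W)\;=\;U^A_{1,0}(\gamma)\cdot\mathrm{tr}\bigl(\bar{\bf L}_W\,F(m)\bigr),\qquad \bar{\bf L}_W:=\int_0^1{\bf L}_W(t)\,dt.
\]
Specialising this to the constant curve $\gamma\equiv m$, for which $I_1(\gamma)=0$ (because $\dot\gamma\equiv 0$) and $U^A_{1,0}(\gamma)=I_N$, forces $\mathrm{tr}(\bar{\bf L}_W F(m))=0$. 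Substituting back then gives $I_1(\gamma)=0$ for every $\gamma\in\Omega_m$, and hence $I_2(\gamma,W)=I_1(\gamma)=0$ as well, which is the second conclusion of the lemma.

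It remains to deduce $D_A^\ast F=0$ from $I_1\equiv 0$. This is exactly the implication behind Theorem~A, proved in~\cite{AGV1994,LV2001}: testing the identity $I_1(\gamma)=0$ against curves concentrated at an arbitrary point $x\in M$ with an arbitrary tangent direction $v\in T_xM$ produces the pointwise condition $(D_A^\ast F)(x)v=0$. The only mildly technical point in the whole argument is the dominated-convergence step, which in view of the uniform bounds above is routine; I do not anticipate any serious obstacle.
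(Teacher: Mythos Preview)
Your argument is correct and follows exactly the mechanism the paper signals: the paper does not give a full proof here but refers to~\cite{Volkov2020}, stating only that the lemma follows because $I_1$ is reparameterization-invariant while $I_2$ is not. Your collapsing reparameterizations $\sigma_n(t)=t^n$ are one clean way to exploit this asymmetry, and the dominated-convergence step and the constant-curve test are sound; note incidentally that the conclusion $I_1\equiv 0$ is precisely $\Delta_L^{AGV}U^A_{1,0}=0$, so the passage to the Yang--Mills equations is literally the Leandre--Volovich direction of Theorem~A, as you say. The only cosmetic difference from the paper is that in the companion argument (Lemma~\ref{lemma2}) the reparameterization chosen there concentrates the curve at the endpoint $\gamma(1)$ rather than at the basepoint $m$; your choice is equally valid for the present lemma.
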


The  complete  proof of the  lemma can been found in~\cite{Volkov2020}.

\begin{remark}
In paper~\cite{Volkov2019a} by the author, the covariant definition of the L\'evy Laplacian on manifolds was introduced.
A natural question arises whether it is
 possible to introduce the covariant analog of the modified L\'evy Laplacian that is invariant under Weyl transformation  and is invariant under reparametrization of the curves.
\end{remark}

\section{Main theorem}
\label{Sec4}
In this section we give the formulation and the proof of the main theorem.

We can assume that $\{e_1,e_2,e_3,e_4\}$ is a right-handed basis in $T_mM$ without loss of generality.
Let us introduce bivectors in $\Lambda^2(T_mM)$:
$$
v_1^{\pm}=\frac 1{\sqrt{2}}(e_1\wedge e_2\pm e_3\wedge e_4),
$$
$$
v_2^{\pm}=\frac 1{\sqrt{2}}(e_1\wedge e_3\mp e_2\wedge e_4),
$$
$$
v_3^{\pm}=\frac 1{\sqrt{2}}(e_1\wedge e_4\pm e_2\wedge e_3).
$$
Let $v^{\pm}_i(\gamma,\cdot)$ be the parallel transport of $v^{\pm}_i$ ($i\in\{1,2,3\}$) along $\gamma\in \Omega_m$ by the Riemannian connection. Then $\{v_1^{+}(\gamma,t),v_2^{+}(\gamma,t),v_3^{+}(\gamma,t)\}$ and $\{v_1^{-}(\gamma,t),v_2^{-}(\gamma,t),v_3^{-}(\gamma,t)\}$ are  orthonormal bases in $\Lambda^2_+(T_{\gamma(t)}M)$ and
 $\Lambda^2_-(T_{\gamma(t)}M)$ respectively for any $t\in [0,1]$.

 We identify an element from $so(4)$ with the action on $T_mM$.
  Let in the basis $\{e_1,e_2,e_3,e_4\}$
  the operators ${\bf L}_W^+(t)$   and  ${\bf L}_W^-(t)$ have 
  matrices
  \begin{equation}
\label{BL}
{\bf L}_W^+(t)=
\begin{pmatrix}
0&-\omega_{W1}^+(t)&-\omega_{W2}^{+}(t)&-\omega_{W3}^{+}(t)\\
\omega_{W1}^+(t)&\;\,\, 0&-\omega_{W3}^{+}(t)&\;\,\, \omega_{W2}^{+}(t)\\
\omega_{W2}^{+}(t)&\;\,\, \omega_{W3}^{+}(t)&\;\,\, 0&-\omega_{W1}^{+}(t)\\
\omega_{W3}^{+}(t)&-\omega_{W2}^{+}(t)&\;\,\, \omega_{W1}^{+}(t)&\;\,\, 0
\end{pmatrix}
\end{equation}
and
 \begin{equation}
{\bf L}_W^-(t)=
\begin{pmatrix}
0&-\omega_{W1}^{-}(t)&-\omega_{W2}^{-}(t)&-\omega_{W3}^{-}(t)\\
\omega_{W1}^{-}(t)&\;\,\, 0&\;\,\, \omega_{W3}^{-}(t)&-\omega_{W2}^{-}(t)\\
\omega_{W2}^{-}(t)&-\omega_{W3}^{-}(t)&\;\,\, 0&\;\,\, \omega_{W1}^{-}(t)\\
\omega_{W3}^{-}(t)&\;\,\, \omega_{W2}^{-}(t)&-\omega_{W1}^{-}(t)&\;\,\, 0
\end{pmatrix}
\end{equation}
respectively.

Let for any $\gamma\in \Omega_m$ and $t,r\in [0,1]$
bivectors $v_{W}^-(\gamma,t,r)\in \Lambda^2_-(T_{\gamma(t)}M))$ and
$v_W^+(\gamma,t,r)\in \Lambda^2_+(T_{\gamma(t)}M))$ be defined by
$$
v_{W}^\pm(\gamma,t,r)=\omega_{W1}^\pm(r)v_1^\pm(\gamma,t)+\omega_{W2}^{\pm}(r)v_2^\pm(\gamma,t)+\omega_{W3}^{\pm}(r)v_3^\pm(\gamma,t).
$$

Let $\alpha_{Wi}^{\pm}=\int_0^1\omega_{Wi}^\pm(r)dr$ ($i\in \{1,2,3\}$) and
 $$w_W^\pm(\gamma,t)=\alpha_{W1}^\pm v_1^\pm(\gamma,t)+\alpha_{W2}^\pm v_2^\pm(\gamma,t)+\alpha_{W3}^\pm v_3^\pm(\gamma,t).$$

It is easy to see that $$w_W^\pm(\gamma,t)=\int_0^1v^\pm_W(\gamma,t,r)dr.$$

We will use the following notations for   all $\gamma\in \Omega_m$ and $t,r\in[0,1]$:
\begin{equation*}
L(\gamma,t)=U^A_{t,0}(\gamma)^{-1}F(\gamma(t))U^A_{t,0}(\gamma),
\end{equation*}
\begin{equation*}
L_{\pm}(\gamma,t)=U^A_{t,0}(\gamma)^{-1}F_{\pm}(\gamma(t))U^A_{t,0}(\gamma),
\end{equation*}
$$
L^W(\gamma,t,r)= \mathrm{tr}(\dot{W}(r)W^{-1}(r)L(\gamma,t)).
$$
It can been obtained by direct computations that
\begin{multline*}
L^W(\gamma,t,r)=\mathrm{tr}(\dot{W}(r)W^{-1}(r)L(\gamma,t))=\mathrm{tr}({\bf L}_W^-(r)L_-(\gamma,t))+\mathrm{tr}({\bf L}_W^+(r)L_+(\gamma,t))=\\
=L_-(\gamma,t)<v_{W}^-(\gamma,t,r)>+L_+(\gamma,t)<v_W^+(\gamma,t,r)>.
\end{multline*}

Due to Theorem~\ref{Theorem1} we have
\begin{equation*}
\Delta^W_LU^A_{1,0}(\gamma)=-
\int_0^1U^A_{1,t}(\gamma)D_A^\ast F(\gamma(t))\dot{\gamma}(t)U^A_{t,0}(\gamma)dt
+U^A_{1,0}(\gamma)\int_0^1L^W(\gamma,t,t)dt.
\end{equation*}
Lemma~\ref{lemma1} implies that if $ \Delta_L^{W} U^A_{1,0}=0$
then $\int_0^1L^W(\gamma,t,t)dt$ for any $\gamma \in \Omega_m$.

\begin{lemma}
\label{lemma2}
Let $W\in C^1([0,1],S^3_L)$.
If the parallel transport  $U^A_{1,0}$ is a solution of the Laplace equation for the L\'evy Laplacian   $\Delta_L^{W}$:
$ \Delta_L^{W} U^A_{1,0}=0,$
then for any $\gamma\in \Omega_m$ the following holds
  $$
 F_+(\gamma(1))<w_W^+(\gamma,1)>=0.
  $$
\end{lemma}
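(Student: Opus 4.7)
The plan is to combine the integral identity noted just before the lemma statement --- namely $\int_0^1 L^W(\gamma,t,t)\,dt=0$ for every $\gamma\in\Omega_m$ --- with a one-parameter family of modifications of $\gamma$ that concentrate all the weight of the integrand at the endpoint. The first step is a reduction: since $W\in C^1([0,1],S^3_L)$, one has ${\bf L}_W(t)\in\mathrm{Lie}(S^3_L)$, so ${\bf L}_W^-(t)\equiv 0$, $v_W^-(\gamma,t,r)\equiv 0$, and the decomposition of $L^W$ given above Theorem~\ref{Theorem1} collapses to
\begin{equation*}
\int_0^1 L_+(\gamma,t)<v_W^+(\gamma,t,t)>\,dt=0\qquad\text{for all }\gamma\in\Omega_m.
\end{equation*}

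Next, for fixed $\gamma\in\Omega_m$ and small $\epsilon\in(0,1)$, I would consider
\begin{equation*}
\gamma_\epsilon(t)=\begin{cases}\gamma(t/\epsilon),&0\le t\le\epsilon,\\ \gamma(1),&\epsilon\le t\le 1.\end{cases}
\end{equation*}
This curve is continuous and piecewise $H^1$, so it lies in $\Omega_m$ (a brief smoothing near $t=\epsilon$ yields a $C^\infty$ representative if desired, without affecting the limit below). Because $\gamma_\epsilon$ is constant on $[\epsilon,1]$, both parallel transports $U^A_{t,0}(\gamma_\epsilon)$ and $Q_{t,0}(\gamma_\epsilon)$ coincide with $U^A_{1,0}(\gamma)$ and $Q_{1,0}(\gamma)$ on that subinterval; consequently $L_+(\gamma_\epsilon,t)=L_+(\gamma,1)$ and $v_i^+(\gamma_\epsilon,t)=v_i^+(\gamma,1)$ for $t\in[\epsilon,1]$.

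Substituting $\gamma_\epsilon$ into the reduced constraint and splitting the integral at $\epsilon$ yields
\begin{equation*}
0=\int_0^\epsilon L_+(\gamma_\epsilon,t)<v_W^+(\gamma_\epsilon,t,t)>\,dt+L_+(\gamma,1)<\textstyle\sum_{i=1}^3\bigl(\int_\epsilon^1\omega_{Wi}^+(t)\,dt\bigr)v_i^+(\gamma,1)>.
\end{equation*}
A change of variables $s=t/\epsilon$ in the first integral extracts a factor of $\epsilon$ in front of a bounded integral (by continuity of $\omega_{Wi}^+$ on $[0,1]$ and boundedness of $L_+(\gamma,\cdot)$ and $v_i^+(\gamma,\cdot)$), so the first term is $O(\epsilon)$. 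As $\epsilon\to 0$ the second term tends to $L_+(\gamma,1)<w_W^+(\gamma,1)>$, hence $L_+(\gamma,1)<w_W^+(\gamma,1)>=0$, and conjugating by the invertible $U^A_{1,0}(\gamma)$ yields the desired $F_+(\gamma(1))<w_W^+(\gamma,1)>=0$.

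The main obstacle is technical: verifying that $\gamma_\epsilon\in\Omega_m$ (immediate from $\dot\gamma\in L^2$) and justifying the passage to the limit. Both become routine once the pointwise identity for $L^W$ is available, since the integrand depends continuously on all parameters involved.
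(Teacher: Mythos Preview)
Your proposal is correct and follows essentially the same approach as the paper: reduce to the $L_+$ term using $W\in C^1([0,1],S^3_L)$, introduce the reparametrized curve $\gamma_\epsilon$ that traverses $\gamma$ on $[0,\epsilon]$ and then stays constant, split the vanishing integral at $\epsilon$, bound the first piece by $C\epsilon$, and let $\epsilon\to 0$ so that the surviving term gives exactly $F_+(\gamma(1))\!<\!w_W^+(\gamma,1)\!>=0$. The only cosmetic differences are that the paper uses the letter $r$ instead of $\epsilon$ and phrases the identification on $[\epsilon,1]$ via reparametrization invariance rather than constancy of the parallel transports; your remark about smoothing is unnecessary since $\Omega_m$ already consists of $H^1$-curves.
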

\begin{proof}

If  $W\in C^1([0,1],S^3_L)$, then ${\bf L}^-_W=0$ and
\begin{equation}
L^W(\gamma,t,r)=L_+(\gamma,t)<v_{W}^+(\gamma,t,r)>.
\end{equation}

For any  $r\in (0,1]$ let introduce $\gamma_r\in\Omega_m$ by 
$$
\gamma_{r}(t)=\begin{cases}
\gamma(t/r), &\text{if $0\leq t\leq r$,}\\
\gamma(1),  &\text{if $r<t\leq1$.}\\
\end{cases}
$$

The parallel transport is invariant under reparameterization of the curve $\gamma$ (see~\cite{Volkov2020}). Hence,
$$
L^W(\gamma_r,t,t)=\begin{cases}L^W(\gamma,t/r,t)
, &\text{if $0\leq t\leq r$,}\\
L^W(\gamma,1,t),  &\text{if $r<t\leq1$.}\\
\end{cases}
$$

Then
\begin{equation}
\label{eq}
   \int_0^1 L^W(\gamma_r,t,t)dt
   =\int_0^rL^W(\gamma,t/r,t)dt
   +\int_r^1L_+(\gamma(1))<v^+_W(\gamma,1,t)>dt.
  \end{equation}
There exist $C>0$ such that
  $$
\sup_{t\in[0,1],r\in[0,1]} \| L^W(\gamma,t,r)\|<C.
  $$
 Hence
 $$
 \|\int_0^rL^W(\gamma,t/r,t)dt\|\leq Cr
 $$
and
$$
\lim_{r\to 0+}\int_0^rL^W(\gamma,t/r,t)dt=0.
$$

The last equality and~(\ref{eq}) together imply 
\begin{multline}
\label{eq2}
\lim_{r\to 0+}\int_0^1L^W(\gamma_r,t,t)dt=\\=\int_0^1L_+(\gamma(1))<v^+_W(\gamma,1,t)>dt=L_+(\gamma(1))<w^+_W(\gamma,1)>.
\end{multline}
If $ \Delta_L^{W} U^A_{1,0}=0$
then $\int_0^1L^W(\gamma_r,t,t)dt$ for any $\gamma \in \Omega_m$ and $r\in(0,1]$. Hence, it follows from~(\ref{eq2}) that
$$
L_+(\gamma(1))<w^+_W(\gamma,1)>=0.
$$
So we have $F_+(\gamma(1))<w_W^+(\gamma,1)>=0$ for any $\gamma\in \Omega_m$.
\end{proof}

In the case
$\mathrm{Hol}_m^0(\Lambda^2_+(TM))\cong U(1)$, 
we can assume without loss of generality that $v_1^{+}$ is invariant under any action from this group.
Let $V_1=\mathrm{span}\{v_1^{+}\}$ and $V_2=\mathrm{span}\mathrm\{v_2^{+},v_3^{+}\}$. Then the spaces $V_1$  and $V_2$ are   invariant under   any action from $\mathrm{Hol}_m^0(\Lambda^2_+(TM))$.
For any $\gamma\in \Omega_m$ bivector  $v_1^{+}(\gamma,1)$ is invariant  under any action from $\mathrm{Hol}_{\gamma(1)}^0(\Lambda^2_+(TM))$
and  the orbit
of $v_2^{+}(\gamma,1)$ and $v_3^{+}(\gamma,1)$ coincides with 
$$
\mathrm{Orb}(v_2^{+}(\gamma,1))=\mathrm{Orb}(v_3^{+}(\gamma,1))=\{v_2^{+}(\gamma,1)\cos\theta+v_3^{+}(\gamma,1)\sin\theta\colon\theta\in [0,2\pi)\}.
$$
Also we can canonically identify  $\mathrm{Lie}(S^3_L)$ and  $\Lambda^2_+(T_mM)$
by
$$
\begin{pmatrix}
0&-b&-c&-d\\
b&\;\,\, 0&-d&\;\,\, c\\
c&\;\,\, d&\;\,\, 0&-b\\
d&-c&\;\,\, b&\;\,\, 0
\end{pmatrix}
\leftrightarrow bv_{1}^{+}+cv_{2}^{+}+dv_{3}^{+}.
$$
So we can define orthogonal projections $\mathrm{pr}_{V_1}$ and $\mathrm{pr}_{V_2}$ in $\mathrm{Lie}(S^3_L)$ on the spaces $V_1$ and $V_2$ respectively.

Now we can formulate and prove the main theorem.

\begin{theorem}
\label{MainTheor}
Let $M$ be an orientable  compact Riemannian 4-manifold.
\begin{enumerate}
\item If  $\mathrm{Hol}_m^0(\Lambda^2_+(TM))\cong SO(3)$ and $W\in C^1([0,1],S^3_L)$ such that $\int_0^1{\bf L}_W(t)dt\neq 0$;
\item If  $\mathrm{Hol}_m^0(\Lambda^2_+(TM))\cong U(1)$ and $W\in C^1([0,1],S^3_L)$ such that $\mathrm{pr}_{V_1}(\int_0^1{\bf L}_W(t)dt)\neq 0$ and
$\mathrm{pr}_{V_2}(\int_0^1{\bf L}_W(t)dt)\neq 0$
\end{enumerate} 
 the following two assertions are equivalent:
\begin{enumerate}
\item  a connection $A$ is a solution of anti-self-duality equations : $F=-\ast F$,
\item  the parallel transport  $U^A_{1,0}$ is a solution of the equation:
$
 \Delta_L^{W} U^A_{1,0}=0.$
\end{enumerate}
\end{theorem}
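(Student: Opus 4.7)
The forward implication is a short consequence of Theorem~\ref{Theorem1} together with the Remark that follows it. If $A$ is an instanton then $F_+=0$, and combining $*F=-F$ with the Bianchi identity $D_AF=0$ gives $D_A^\ast F=-*D_A*F=*D_AF=0$. Of the three terms on the right-hand side of~(\ref{lllaplace1}), the first vanishes by $D_A^\ast F=0$, the second by $F_+=0$, and the third by the observation that $W\in C^1([0,1],S^3_L)$ forces ${\bf L}_W(t)\in\mathrm{Lie}(S^3_L)$, i.e.\ ${\bf L}_W^-\equiv 0$. Hence $\Delta_L^WU^A_{1,0}=0$, and this direction holds under either of the holonomy hypotheses with no further input.

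For the converse, assume $\Delta_L^WU^A_{1,0}=0$. Lemma~\ref{lemma2} immediately supplies the pointwise constraint
\[
F_+(\gamma(1))\langle w_W^+(\gamma,1)\rangle=0\quad\text{for every }\gamma\in\Omega_m.
\]
My plan is to upgrade this to $F_+(x)=0$ for every $x\in M$ via the holonomy action on $\Lambda^2_+$. Fix $x\in M$ and a reference curve $\gamma_0\in\Omega_{m,x}$. For any contractible loop $\ell$ at $x$, the concatenation $\gamma_0\ast\ell$ lies in $\Omega_{m,x}$, and its Levi--Civita parallel transport at the endpoint equals $Q_\ell\circ Q_{1,0}(\gamma_0)$, where $Q_\ell\in\mathrm{Hol}_x^0(\Lambda^2_+(TM))$. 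Letting $\ell$ vary realises every element of $\mathrm{Hol}_x^0(\Lambda^2_+(TM))$, so the vector $w_W^+(\gamma,1)$ traces out the full orbit of $w_W^+(\gamma_0,1)$ under $\mathrm{Hol}_x^0(\Lambda^2_+(TM))$. Because $W$ takes values in $S^3_L$, the canonical identification $\mathrm{Lie}(S^3_L)\cong\Lambda^2_+(T_mM)$ sends $\int_0^1{\bf L}_W(t)\,dt$ to the bivector whose parallel transport along $\gamma_0$ is exactly $w_W^+(\gamma_0,1)$; in particular the hypotheses on $\int_0^1{\bf L}_W(t)\,dt$ translate directly into nonvanishing conditions on $w_W^+(\gamma_0,1)$ and its projections.

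In Case 1 the restricted holonomy $\mathrm{Hol}_x^0(\Lambda^2_+(TM))\cong SO(3)$ acts transitively on spheres of $\Lambda^2_+(T_xM)$, so the orbit of the nonzero vector $w_W^+(\gamma_0,1)$ is a $2$-sphere whose linear span is all of $\Lambda^2_+(T_xM)$; the linearity of $F_+(x)$ then forces $F_+(x)=0$. In Case 2 one uses the decomposition $\Lambda^2_+(T_xM)=V_1\oplus V_2$ preserved by $SO(2)$, expands $w_W^+(\gamma,1)$ in the frame $\{v_i^+(\gamma_0,1)\}$, and rotates the $V_2$-part by an angle $\theta$; collecting the resulting identity in $1$, $\cos\theta$, $\sin\theta$ produces three equations whose joint vanishing is equivalent to $F_+(x)$ having zero components along $\tilde v_1$ (forced by $\mathrm{pr}_{V_1}\int{\bf L}_W\neq0$) and along $\tilde v_2,\tilde v_3$ (forced, via a $2\times2$ system with determinant $(\alpha_{W2}^+)^2+(\alpha_{W3}^+)^2>0$, by $\mathrm{pr}_{V_2}\int{\bf L}_W\neq0$). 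Hence $F_+(x)=0$ in this case as well, and since $x\in M$ was arbitrary we conclude $F\equiv -*F$.

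The step I expect to require the most care is the passage from Lemma~\ref{lemma2} to the orbit statement: although the loop-concatenation description of $\mathrm{Hol}_x^0$ is standard in the smooth category, one must verify that arbitrary elements of the \emph{restricted} holonomy are realised by $H^1$-loops in the sense in which Lemma~\ref{lemma2} applies, and that the identification of $w_W^+(\gamma_0,1)$ with (the parallel transport of) $\int_0^1{\bf L}_W(t)\,dt$ via $\mathrm{Lie}(S^3_L)\cong\Lambda^2_+(T_mM)$ is handled consistently with the sign/basis conventions~(\ref{BL}) used to define $\omega_{Wi}^+$.
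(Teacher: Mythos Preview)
Your proof is correct and follows essentially the same route as the paper: apply Lemma~\ref{lemma2}, observe that varying $\gamma\in\Omega_{m,x}$ makes $w_W^+(\gamma,1)$ sweep out the $\mathrm{Hol}_x^0(\Lambda^2_+(TM))$-orbit of a fixed $w_W^+(\gamma_0,1)$, and then check in each holonomy case that this orbit spans $\Lambda^2_+(T_xM)$. The paper states the forward direction as ``trivial'' and in Case~2 simply asserts that the orbit contains three linearly independent vectors, whereas you make both of these explicit (the Bianchi argument and the $1,\cos\theta,\sin\theta$ decomposition); the concerns you flag in your final paragraph are legitimate but routine, since smooth contractible loops are $H^1$ and concatenations of $H^1$-curves remain $H^1$.
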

\begin{proof}
Let $\Delta_L^{W} U^A_{1,0}=0$. 
Fix any $x\in M$.  Fix an arbitrary $\sigma\in \Omega_{m,x}$.
Then the set $\{w^+_W(\gamma,1)\colon \gamma\in \Omega_{m,x}\}$ coincides with the orbit $\mathrm{Orb}(w^{+}_W(\sigma,1))$ of $w^+_W(\sigma,1)$
 under the action of the group $\mathrm{Hol}^0_x(\Lambda^2_+(TM))$. Then Lemma~\ref{lemma2} implies that $F_+(x)<w>=0$ for any $w\in \mathrm{Orb}(w^{+}_W(\sigma,1))$. Now we can consider two cases.
 
\begin{enumerate}
\item 
Let   $\mathrm{Hol}^0_m(\Lambda^2_+(TM))\cong SO(3)$ and $\int_0^1{\bf L}_W(t)dt\neq0$.  The last  condition is equal to  $\sum_{i=1}^3(\alpha^{+}_{Wi})^2\neq 0$.
In this case, $w^+_W(\sigma,1)\neq 0$. Due to $\mathrm{Hol}_x^0(\Lambda^2_+(TM))=SO(3)$,
the orbit $\mathrm{Orb}(w^{+}_W(\sigma,1))$ is a 2-sphere. There are three linearly independent vectors in  $\mathrm{Orb}(w^{+}_W(\sigma,1))$. 

\item 
Let   $\mathrm{Hol}_m^0(\Lambda^2_+(TM))\cong U(1)$ and  $\mathrm{pr}_{V_1}(\int_0^1{\bf L}_W(t)dt)\neq 0$ and
$\mathrm{pr}_{V_2}(\int_0^1{\bf L}_W(t)dt)\neq 0$.
In this case,
the  orbit of $w^{+}_W(\sigma,1)$ has a form
\begin{multline*}
\mathrm{Orb}(w^{+}_W(\sigma,1))=\\
=\{\alpha^{+}_{W1}v_1^{+}(\sigma,1)+\alpha^{+}_{W2}v_2^{+}(\sigma,1)\cos\theta+\alpha^{+}_{W2}v_3^{+}(\sigma,1)\sin\theta\colon\theta\in [0,2\pi)\}.
\end{multline*}
The conditions 
$\mathrm{pr}_{V_1}(\int_0^1{\bf L}_W(t)dt)\neq 0$ and
$\mathrm{pr}_{V_2}(\int_0^1{\bf L}_W(t)dt)\neq 0$ means that $\alpha^{+}_{W1}\neq 0$ and $(\alpha^{+}_{W2})^2+(\alpha^{+}_{W3})^2\neq 0$.
In this case,  there are three linearly independent vectors in  $\mathrm{Orb}(w^{+}_W(\sigma,1))$.
\end{enumerate}
In the both cases, the equality  $F_+(x)<w>=0$ for any $w\in \mathrm{Orb}(w^{+}_W(\sigma,1))$ imply $F_+(x)=0$. Hence,
$A$ is an anti-self-dual connection.
The other side of the theorem is trivial.
\end{proof}

\begin{remark}
Let $W\in C^1([0,1],S^3_R)$.
If we change the orientation of the manifold $M$ in the Theorem~\ref{MainTheor} we obtain the similar equivalence of the the Yang--Mills self-duality  equations and   the Laplace equation for the modified L\'evy Laplacian  $\Delta_L^{W}$.
\end{remark}

\begin{remark}
If the simply connected base manifold
 $(M,g)$ is a Calabi--Yau manifold then Theorem~\ref{MainTheor} does not hold. As it was mentioned in the introduction there exists the Weyl transformation $g\to \overline{g}$ such that
for $(M,\overline{g})$ Theorem~\ref{MainTheor} holds.
\end{remark}

\section*{Conclusion}

In this paper, we have shown that the Yang--Mills self-duality and anti-self-duality  equations on a connection  in the vector bundle over the  Riemannian 4-manifold can been reformulated as a linear differential equation on the parallel transport generated by the connection.
The L\'evy Laplacian is an infinite dimensional differential operator  on the space of sections in the vector bundle over the space of $H^1$-curves in the Riemannian manifold. It can been defined as an integral functional generated by the special form of the second order derivative. The equivalence of the Laplace equation for the L\'evy Laplacian and the Yang--Mills equations was previously known (see~\cite{AGV1994,LV2001}). The L\'evy Laplacian is not rotation invariant and the infinite dimensional rotation $W\in C^1([0,1],SO(4))$ generates the modified  L\'evy Laplacian $\Delta_L^W$. The group of 4-dimensional rotations $SO(4)$ has two normal subgroups $S^3_L$ and $S^3_R$.
In the previous works~\cite{VolkovLLI,Volkov2020},  a connection between Laplace equations for the modified L\'evy Laplacians $\Delta_L^W$
 for $W\in C^1([0,1],S^3_L)$ (for $W\in C^1([0,1],S^3_R)$)  and the Yang--Mills anti-self-duality (self-duality) equations was discovered.
In this paper,  we have strengthened the results of  these  works. Under the assumption that  the restricted  holonomy group of the bundle of self-dual 2-forms of the 4-manifold is nontrivial,  we have  found the sufficient conditions on  the rotation $W\in C^1([0,1],S^3_L)$ ($W\in C^1([0,1],S^3_R)$)
such that the following holds.  A connection is an instanton (antiinstanton)  if and only if  the parallel transport is a solution  to the  Laplace equation for the modified L\'evy Laplacian $\Delta_L^W$.

\section*{Acknowledgments}

The author would like to express his deep gratitude to L.~Accardi, O.~G.~Smolyanov and I.~V.~Volovich for helpful discussions.

\end{document}